\theoremstyle{plain}
\newtheorem{theorem}{Theorem}
\newtheorem{lemma}{Lemma}
\theoremstyle{definition}
\newtheorem{definition}{Definition}
\newtheorem{example}{Example}
\theoremstyle{remark}
\title{Linear-space LCS enumeration with quadratic-time delay for two strings}
\author{Yoshifumi Sakai \thanks{Graduate School of Agricultural Science, Tohoku University, Japan}}
\begin{document}

\maketitle

\begin{abstract}
Suppose we want to seek the longest common subsequences (LCSs) of two strings as informative patterns that explain the relationship between the strings.
The dynamic programming algorithm gives us a table from which all LCSs can be extracted by traceback.
However, the need for quadratic space to hold this table can be an obstacle when dealing with long strings.
A question that naturally arises in this situation would be whether it is possible to exhaustively search for all LCSs one by one in a time-efficient manner using only a space linear in the LCS length, where we treat read-only memory for storing the strings as excluded from the space consumed.
As a part of the answer to this question, we propose an $O(L)$-space algorithm that outputs all distinct LCSs of the strings one by one each in $O(n^2)$ time, where the strings are both of length $n$ and $L$ is the LCS length of the strings.
\end{abstract}

\section{Introduction}\label{sec introduction}

Comparing two strings to find common patterns that would be most informative of their relationships is a fundamental task in parsing them.
The longest common subsequences (LCSs) are regarded as such common patterns, and the problem of efficiently finding one of them has long been investigated~\cite{Apo,AG,CP,GH,Hir,HS,IR,MP,NKY,WF}.
Here, a subsequence of a string is the sequence obtained from the string by deleting any number of elements at any position that is not necessarily contiguous.
Hence, an LCS of two strings is a common subsequence obtained by deleting the least possible number of elements from the strings.
For example, if $X = \mathtt{acddadacbcb}$ and $Y = \mathtt{caccbaadcad}$ are the target strings, then $\mathtt{caccb}$ is one of the seven distinct LCSs of $X$ and $Y$.
The six other LCSs are $\mathtt{cacbc}$, $\mathtt{accbc}$, $\mathtt{acaac}$, $\mathtt{acadc}$, $\mathtt{acada}$, and $\mathtt{acdad}$.

Given a pair of strings $X$ and $Y$ both of length $n$ and a string $Z$ of length less than $n$, it is easy to determine whether $Z$ is a common subsequence of $X$ and $Y$ in $O(n)$ time.
In comparison, it is hard to determine whether $Z$ is an LCS of $X$ and $Y$.
The reason is that we need to know the LCS length in advance.
It was revealed~\cite{ABW,BK} that unless the strong exponential time hypothesis (SETH) does not hold, no algorithm can determine the LCS length of $X$ and $Y$ in $O(n^{2 - \varepsilon})$ time for any positive constant $\varepsilon$.
Therefore, we cannot find any LCS in $O(n^{2 - \varepsilon})$ time under SETH.

On the other hand, as is well known, finding an arbitrary LCS of $X$ and $Y$ is possible in $O(n^2)$ time and $O(n^2)$ space by the dynamic programming (DP) algorithm~\cite{WF}.
The space of size $O(n^2)$ consumed to store the LCS lengths for all pairs of a prefix of $X$ and a prefix of $Y$ can be treated as constituting a particular directed acyclic graph (DAG) such that any path from the source to the sink represents an LCS of $X$ and $Y$.
This DAG is also available in applications other than seeking a single arbitrary LCS because each existing LCS corresponds to at least one of the paths from the source to the sink.
If one considers only the problem of finding a single arbitrary LCS without intending a data structure representing all LCSs, Hirschberg~\cite{Hir} showed that $O(n)$ space is sufficient to solve the problem in the same $O(n^2)$ time as the DP algorithm.
Excluding the read-only memory storing $X$ and $Y$, his algorithm performs only in $O(L)$ space with a slight modification, where $L$ is the LCS length of $X$ and $Y$.

The advantage of $O(L)$-space algorithms for finding an arbitrary LCS is that they perform without reserving $O(n^2)$ space, the size of which becomes pronounced when $n$ is large.
Thus, a natural question that would come to mind is whether it is possible to design $O(L)$-space algorithms that are as accessible to all LCSs as the DP algorithm in addition to the above advantage.
As a part of the answer to this question, the author showed in \cite{Sak} that an $O(L)$-space algorithm is capable of enumerating all distinct LCSs with $O(n^2 \log L)$-time delay, which is asymptotically inferior to the execution time of Hirschberg's $O(L)$-space algorithm~\cite{Hir} for finding a single arbitrary LCS by a factor $O(\log L)$.
Enumeration is done by introducing a DAG, called the all-LCS graph, with each path from the source to the sink corresponding to a distinct LCS and vice versa.
Since the size of the all-LCS graph is $O(n^2)$, which exceeds $O(L)$, the proposed algorithm performs a depth-first search without explicitly constructing it.

In this article, we consider the same enumeration problem as above.
Our contribution is to address the shortcomings in the previous linear-space LCS enumeration algorithm~\cite{Sak} concerning delay time.
We accomplished this by replacing the crucial part of the algorithm, which finds the next LCS strictly specified by the immediately preceding output LCS, with a variant of Hirschberg's linear-space LCS-finding algorithm~\cite{Hir}.
Similarly to Hirschberg's algorithm, the crucial part of the previous algorithm finds the LCS by divide and conquer.
The reason why the delay time of the algorithm is longer than the execution time of Hirschberg's algorithm is that Hirschberg's algorithm finds a balanced midpoint concerning the length of input strings when dividing them.
In contrast, the previous algorithm for LCS enumeration could only find a balanced midpoint concerning the length of the LCS it seeks.
The variant of Hirschberg's algorithm that we adopt to improve delay time differs from the original in that the chosen midpoint is strictly specified.
This difference would be natural since the original can find an arbitrary LCS, whereas our algorithm must find a specified LCS.
We show that the exact choice of the midpoint in Hirschberg's algorithm can improve the delay time of the previous linear-space LCS-enumeration algorithm~\cite{Sak} from $O(n^2 \log L)$ to $O(n^2)$.

\section{Preliminaries}\label{sec pre}

For any sequences $S$ and $S'$, let $S \circ S'$ denote the concatenation of $S$ followed by $S'$.
For any sequence $S$, let $|S|$ denote the length of $S$, i.e., the number of elements in $S$.
For any index $i$ with $1 \leq i \leq |S|$, let $S[i]$ denote the $i$th element of $S$, so that $S = S[1] \circ S[2] \circ \cdots \circ S[|S|]$.
A \emph{subsequence} of $S$ is the sequence obtained from $S$ by deleting zero or more elements at any positions not necessarily contiguous, i.e., the sequence $S[i_1] \circ S[i_2] \circ \cdots \circ S[i_\ell]$ for some length $\ell$ with $0 \leq \ell \leq |S|$ and some $\ell$ indices $i_1,i_2,\dots,i_\ell$ with $1 \leq i_1 < i_2 < \cdots < i_\ell \leq |S|$.
We sometimes denote this subsequence of $S$ by $S[i_1 \circ i_2 \circ \cdots \circ i_\ell]$, or $S[I]$ with $I = i_1 \circ i_2 \circ \cdots \circ i_\ell$.
For any indices $i$ and $i'$ with $1 \leq i \leq i' + 1 \leq |S| + 1$, let $S[i : i']$ denote the contiguous subsequence $S[i] \circ S[i + 1] \circ \cdots \circ S[i']$ of $S$, where $S[i : i']$ with $i = i' + 1$ represents the empty subsequence.
If $i = 1$ (resp. $i' = |S|$), then $S[i : i']$ is called a \emph{prefix} (resp. \emph{suffix}) of $S$.

A string is a sequence of characters over an alphabet set.
For any strings $X$ and $Y$, a \emph{common subsequence} of $X$ and $Y$ is a subsequence of $X$ that is a subsequence of $Y$.
Let $L(X, Y)$ denote the maximum of $|Z|$ over all common subsequences $Z$ of $X$ and $Y$.
Any common subsequence $Z$ of $X$ and $Y$ with $|Z| = L(X, Y)$ is called a \emph{longest common subsequence} (an \emph{LCS}) of $X$ and $Y$.
The following lemma gives a typical recursive expression for the LCS length.

\begin{lemma}[e.g., \cite{WF}]\label{lem L}
For any strings $X$ and $Y$, if at least one of $X$ and $Y$ is empty, then $L(X, Y) = 0$; otherwise, if $x = y$, then $L(X, Y) = L(X', Y') + 1$; otherwise, $L(X, Y) = \max(L(X', Y), L(X, Y'))$, where $X = X' \circ x$ and $Y = Y' \circ y$.
The same also holds for the case where $X = x \circ X'$ and $Y = y \circ Y'$.
\end{lemma}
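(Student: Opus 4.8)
The plan is to prove each of the stated equalities by sandwiching $L(X,Y)$ between matching lower and upper bounds, using the decomposition $X = X' \circ x$, $Y = Y' \circ y$; the companion statement for the prefix decomposition $X = x \circ X'$, $Y = y \circ Y'$ then follows immediately by symmetry, since a sequence $Z$ is a common subsequence of $X$ and $Y$ if and only if its reversal is a common subsequence of the reversals of $X$ and $Y$, and reversal preserves length. The empty case needs no work: if $X$ or $Y$ has length $0$, the only sequence that embeds into it is the empty sequence, so $L(X,Y)=0$.

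For the case $x=y$, I would first note the lower bound $L(X,Y)\ge L(X',Y')+1$: appending the common character $x$ to any LCS of $X'$ and $Y'$ yields a common subsequence of $X$ and $Y$. For the upper bound I would take an arbitrary nonempty common subsequence $Z=Z'\circ z$ of $X$ and $Y$ and split on whether $z=x$. If $z\ne x$, then in any embedding of $Z$ into $X$ the last element cannot be placed at position $|X|$, so $Z$ already embeds into $X'$, and likewise into $Y'$ because $z\ne y$; hence $|Z|\le L(X',Y')$. If $z=x$, then in an embedding of $Z$ into $X$ the prefix $Z'$ is matched strictly before the position carrying $z$, so $Z'$ embeds into a prefix of $X'$, and symmetrically into a prefix of $Y'$; hence $|Z|=|Z'|+1\le L(X',Y')+1$. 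Maximizing over $Z$ gives the matching upper bound.

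For the case $x\ne y$, the lower bound $L(X,Y)\ge\max(L(X',Y),L(X,Y'))$ is immediate because every common subsequence of $X'$ and $Y$ (resp.\ of $X$ and $Y'$) is in particular a common subsequence of $X$ and $Y$. For the upper bound I would take an LCS $Z=Z'\circ z$ of $X$ and $Y$; since $x\ne y$, the last element $z$ differs from at least one of $x$ and $y$, and by the same reasoning as above $Z$ then embeds into $X'$ (forcing $L(X,Y)\le L(X',Y)$) or into $Y'$ (forcing $L(X,Y)\le L(X,Y')$), either way bounding $L(X,Y)$ by the claimed maximum.

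I do not expect a genuine obstacle, as this is the classical DP recurrence; the only place demanding a little care is the upper bound when $x=y$, where one must not simply assert that some LCS ends by matching the shared final character, but instead argue it through the case split on the last element of an arbitrary common subsequence.
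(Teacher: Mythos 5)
Your proof is correct. The paper states this lemma without proof, simply citing it as the classical recurrence (e.g., Wagner--Fischer), and your argument is exactly the standard one that reference would give: the two-sided bounding in the $x=y$ case, with the case split on the last element of an arbitrary common subsequence rather than assuming some LCS matches the shared final character, is the right way to handle the one subtle point, and the reversal symmetry correctly dispatches the prefix version.
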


Let any algorithm that takes an arbitrary pair of non-empty strings $X$ and $Y$ as input and uses only $O(L(X, Y))$ space to output all distinct LCSs of $X$ and $Y$ one by one each in $O(|X||Y|)$ time be called a \emph{linear-space LCS enumeration algorithm with quadratic-time delay}.
To make this definition meaningful under the space constraint described above, we assume that $X$ and $Y$ are not directly given to the algorithm, but rather $|X|$, $|Y|$ and $O(1)$-time access to whether $X[i] = Y[j]$ for any index pair $(i, j)$ with $1 \leq i \leq |X|$ and $1 \leq j \leq |Y|$ are given without space consumption.
In addition, we assume that the algorithm outputs, instead of each distinct LCS $Z$ of $X$ and $Y$, a sequence $P$ of $L(X, Y)$ increasing indices between $1$ and $Y$ such that $Y[P] = Z$.

\section{Algorithm}\label{sec algo}

This section proposes a linear-space LCS enumeration algorithm with quadratic-time delay.

To handle each distinct LCS $Z$, we use the ``leftmost'' one as the representative of the sequences of $L(X, Y)$ increasing indices that represent the positions at which $Z$ appears in $Y$.
The proposed algorithm outputs all such representative sequences one by one in lexicographic order.

\begin{definition}[$\mathcal{P}(\tilde{X}, \tilde{Y})$]\label{def P}
For any contiguous subsequences $\tilde{X}$ of $X$ and $\tilde{Y} = Y[j' : j'']$ of $Y$, let any sequence $P$ of $L(\tilde{X}, \tilde{Y})$ increasing indices between $j'$ and $j''$ such that $Y[P]$ is an LCS of $\tilde{X}$ and $\tilde{Y}$ and for any index $k$ with $1 \leq k \leq L(\tilde{X}, \tilde{Y})$, $Y[j' : P[k]]$ is the shortest prefix of $\tilde{Y}$ with $Y[P[1 : k]]$ as a subsequence be called an \emph{LCS-position sequence of $\tilde{X}$ and $\tilde{Y}$}.
Let $\mathcal{P}(\tilde{X}, \tilde{Y})$ denote the sequence of all LCS-position sequences of $\tilde{X}$ and $\tilde{Y}$ in lexicographic order.
\end{definition}

\begin{example}\label{ex P}
If $X = \mathtt{acddadacbcb}$ and $Y = \mathtt{caccbaadcad}$ (with $\circ$ omitted), then $\mathcal{P}(X, Y)$ consists of the seven LCS-position sequences listed below in the same order.
\begin{itemize}
\item $1 \circ 2 \circ 3 \circ 4 \circ 5$ (representing $\mathtt{caccb}$)
\item $1 \circ 2 \circ 3 \circ 5 \circ 9$ (representing $\mathtt{cacbc}$)
\item $2 \circ 3 \circ 4 \circ 5 \circ 9$ (representing $\mathtt{accbc}$)
\item $2 \circ 3 \circ 6 \circ 7 \circ 9$ (representing $\mathtt{acaac}$)
\item $2 \circ 3 \circ 6 \circ 8 \circ 9$ (representing $\mathtt{acadc}$)
\item $2 \circ 3 \circ 6 \circ 8 \circ 10$ (representing $\mathtt{acada}$)
\item $2 \circ 3 \circ 8 \circ 10 \circ 11$ (representing $\mathtt{acdad}$)
\end{itemize}
\end{example}

The algorithm searches for the next element $P^\star$ in $\mathcal{P}(X, Y)$ of the last found element $P$ after determining somehow the features of $P$ introduced below.

\begin{definition}[$k^\star_P$ and $j^\star_P$]\label{def branch}
For any consecutive elements $P$ and $P^\star$ after $P$ in $\mathcal{P}(X, Y)$, let $k^\star_P$ denote the least index $k$ with $P[k] < P^\star[k]$ and let $j^\star_P$ denote $P^\star[k^\star_P]$, so that $P^\star[1 : k^\star_P] = P[1 : k^\star_P - 1] \circ j^\star_P$.
\end{definition}

\begin{lemma}\label{lem branch}
For any consecutive elements $P$ and $P^\star$ after $P$ in $\mathcal{P}(X, Y)$, $P^\star$ is the concatenation of $P[1 : k^\star_P - 1] \circ j^\star_P$ followed by the first element of $\mathcal{P}(\tilde{X}, \tilde{Y})$, where $\tilde{X}$ (resp. $\tilde{Y}$) is the suffix of $X$ (resp. $Y$) obtained by deleting the shortest prefix with $Y[P[1 : k^\star_P - 1] \circ j^\star_P]$ as a subsequence.
\end{lemma}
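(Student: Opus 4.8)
The plan is to characterize $P^\star$ by two properties: first, that it agrees with $P$ on the prefix $P[1:k^\star_P-1]$ and then jumps to $j^\star_P$ at position $k^\star_P$; second, that from position $k^\star_P$ onward it is "as small as possible" in the lexicographic order among all LCS-position sequences of $X$ and $Y$ extending that fixed prefix. The first property is immediate from Definition~3 (the definition of $k^\star_P$ and $j^\star_P$), so the real content is to show that the tail $P^\star[k^\star_P+1 : L(X,Y)]$ is exactly the first element of $\mathcal{P}(\tilde X, \tilde Y)$, with $\tilde X$ and $\tilde Y$ the stated suffixes, re-indexed by the absolute positions in $Y$.

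First I would record the key decomposition fact: writing $j'' = P[k^\star_P-1]$ for the last matched position in $Y$ before the branch (or $0$ if $k^\star_P = 1$), the prefix property in Definition~2 (the definition of $\mathcal P$) forces $Y[1:j^\star_P]$ to be the \emph{shortest} prefix of $Y$ admitting $Y[P[1:k^\star_P-1] \circ j^\star_P]$ as a subsequence, and this shortest-prefix condition is exactly what makes $\tilde X, \tilde Y$ well-defined and makes the concatenation in the statement again satisfy the defining shortest-prefix condition of Definition~2 at every index. From here I would argue (i) that any LCS-position sequence of $X$ and $Y$ whose prefix is $P[1:k^\star_P-1] \circ j^\star_P$ must have length $L(X,Y)$ and hence its tail must be an LCS-position sequence of $\tilde X$ and $\tilde Y$ — this uses Lemma~1 applied along the split of $X$ and $Y$ at the shortest-prefix boundary, giving $L(X,Y) = (k^\star_P) + L(\tilde X, \tilde Y)$ — and conversely (ii) that prepending $P[1:k^\star_P-1] \circ j^\star_P$ to any element of $\mathcal{P}(\tilde X, \tilde Y)$ yields a genuine LCS-position sequence of $X$ and $Y$. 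Together, (i) and (ii) establish a bijection, order-preserving on tails, between $\mathcal P(\tilde X,\tilde Y)$ and the LCS-position sequences of $X,Y$ beginning with that fixed prefix.

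Then I would invoke lexicographic minimality: since $P^\star$ is the \emph{immediate} successor of $P$ in $\mathcal{P}(X,Y)$, and $P^\star$ shares the prefix $P[1:k^\star_P-1]$ with $P$ while strictly increasing coordinate $k^\star_P$ to the value $j^\star_P$, no LCS-position sequence can lie strictly between $P$ and $P^\star$; in particular $P^\star$ must be the lexicographically smallest LCS-position sequence whose prefix is $P[1:k^\star_P-1] \circ j^\star_P$. (One must also check that $j^\star_P$ is chosen minimally — i.e. there is no valid LCS-position sequence extending $P[1:k^\star_P-1]$ with a coordinate strictly between $P[k^\star_P]$ and $j^\star_P$ at position $k^\star_P$ — but this too follows from $P^\star$ being the immediate successor, since such a sequence would itself lie strictly between $P$ and $P^\star$.) Under the order-preserving bijection of the previous paragraph, the lexicographically smallest such sequence corresponds to the first element of $\mathcal{P}(\tilde X, \tilde Y)$, which is precisely the claim.

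The main obstacle I anticipate is the bookkeeping in step (i)–(ii): one must verify carefully that the shortest-prefix ("leftmost") condition of Definition~2 is preserved under both the restriction to the tail and the concatenation, because the condition is global (it refers to prefixes of the \emph{whole} string $Y$, not of $\tilde Y$). The resolution is that once the prefix $Y[1:j^\star_P]$ is pinned down as the shortest prefix witnessing $Y[P[1:k^\star_P-1]\circ j^\star_P]$, the shortest-prefix condition for indices $k > k^\star_P$ decouples and becomes exactly the shortest-prefix condition relative to $\tilde Y$; checking this equivalence cleanly — likely by induction on $k$ using Lemma~1 — is the crux, and everything else is routine.
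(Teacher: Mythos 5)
Your proposal is correct and follows essentially the same route as the paper: the paper's (very terse) proof simply asserts your step (i) — that $P^\star[k^\star_P+1 : L(X,Y)]$ is an element of $\mathcal{P}(\tilde{X},\tilde{Y})$ — and then concludes from the fact that $\mathcal{P}(X,Y)$ has no element between $P$ and $P^\star$, which is exactly your step (ii) plus the lexicographic-minimality argument. You have merely spelled out the details (the decoupling of the shortest-prefix condition and the order-preserving correspondence) that the paper leaves implicit.
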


\begin{proof}
Since $P^\star[k^\star_P + 1 : L(X, Y)]$ is an element of $\mathcal{P}(\tilde{X}, \tilde{Y})$, the lemma follows from the fact that $\mathcal{P}(X, Y)$ has no element between $P$ and $P^\star$.
\end{proof}

\begin{algorithm}[t]
\DontPrintSemicolon
$\mathsf{P} \leftarrow$ the empty sequence; $\mathsf{k}^\star \leftarrow 0$\;
\Repeat{$\mathsf{k}^\star$ is a dummy index}{
  $\mathsf{P} \leftarrow \mathsf{P}[1 : \mathsf{k}^\star] \circ \textsc{firstLCS}(i + 1, |X|, j + 1, |Y|)$, where $X[1 : i]$ (resp. $Y[1 : j]$) is the shortest prefix of $X$ (resp. $Y$) with $Y[\mathsf{P}[1 : \mathsf{k}^\star]]$ as a subsequence\;
  output $\mathsf{P}$\;
  \textsc{findBranch}
}
\caption{\textsc{enumerateLCS}\label{algo enumerateLCS}}
\end{algorithm}

The proposed algorithm, which we denote \textsc{enumerateLCS}, finds each element $P$ in $\mathcal{P}(X, Y)$ as a distinct LCS-position sequence based on Lemma~\ref{lem branch} using an index sequence variable $\mathsf{P}$ of length $L(X, Y)$ and an index variable $\mathsf{k}^\star$.
At the beginning of each process corresponding to $P$, these variables are maintained so that $\mathsf{P}[1 : \mathsf{k}^\star] = P[1 : \mathsf{k}^\star]$ and $P[\mathsf{k}^\star + 1 : L(X, Y)]$ is the first element in $\mathcal{P}(\tilde{X}, \tilde{Y})$, where $\tilde{X}$ (resp. $\tilde{Y}$) is the suffix of $X$ (resp. $Y$) obtained by deleting the shortest prefix with $Y[P[1 : \mathsf{k}^\star]]$ as a subsequence.
The algorithm finds the first element $\tilde{P}$ of $\mathcal{P}(\tilde{X}, \tilde{Y})$ by calling function $\textsc{firstLCS}$, as which we adopt a variant of Hirschberg's linear-space LCS-finding algorithm~\cite{Hir} with a very slight modification.
After updating $\mathsf{P}$ to $\mathsf{P}[1 : \mathsf{k}^\star] \circ \tilde{P}$ and outputting it as $P$, the algorithm determines the existing $k^\star_P$ and $j^\star_P$ by executing procedure \textsc{findBranch}, which sets $\mathsf{k}^\star$ and $\mathsf{P}[\mathsf{k}^\star]$ to $k^\star_P$ and $j^\star_P$, respectively, if existing, or sets $\mathsf{k}^\star$ to a dummy index to indicate that $P$ is the last element of $\mathcal{P}(X, Y)$, otherwise.
Consequently, the condition that should be satisfied at the beginning of the process for the next element in $\mathcal{P}(X, Y)$ after $P$ holds due to Lemma~\ref{lem branch}.
Hence, induction proves that all elements in $\mathcal{P}(X, Y)$ are output one by one in lexicographic order.
A pseudo-code of \textsc{enumerateLCS} is given as Algorithm~\ref{algo enumerateLCS}.

Sections~\ref{sec firstLCS} and \ref{sec findBranch} respectively design \textsc{firstLCS} and \textsc{findBranch} to perform as mentioned above in $O(|X||Y|)$ time and $O(L(X,Y))$ space, which prove the following theorem.

\begin{theorem}\label{theo enumerateLCS}
Algorithm \textsc{enumerateLCS} works as a linear-space LCS enumeration algorithm with quadratic-time delay.
\end{theorem}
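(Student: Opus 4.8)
The plan is to establish Theorem~\ref{theo enumerateLCS} by verifying three things about Algorithm~\ref{algo enumerateLCS}: correctness (it outputs exactly the elements of $\mathcal{P}(X,Y)$, each once, in lexicographic order), the space bound $O(L(X,Y))$, and the per-output time bound $O(|X||Y|)$. The correctness argument is essentially already sketched in the paragraph following Lemma~\ref{lem branch}, so the work of the proof is to make the induction precise and then to supply the two missing pieces, namely that \textsc{firstLCS} and \textsc{findBranch} can be implemented within the stated resource bounds. Since Sections~\ref{sec firstLCS} and~\ref{sec findBranch} are devoted to those two subroutines, the theorem itself reduces to an assembly argument.

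First I would state the loop invariant explicitly: at the top of each iteration of the \textbf{Repeat} loop, if the previous iteration output $P$, then $\mathsf{k}^\star = k^\star_P$, $\mathsf{P}[1:\mathsf{k}^\star] = P^\star[1:k^\star_P] = P[1:k^\star_P-1]\circ j^\star_P$, and the tail $P^\star[k^\star_P+1:L(X,Y)]$ to be reconstructed is precisely the first element of $\mathcal{P}(\tilde X,\tilde Y)$ for the suffixes $\tilde X,\tilde Y$ obtained by deleting the shortest prefixes admitting $Y[\mathsf{P}[1:\mathsf{k}^\star]]$ as a subsequence; in the base case $\mathsf{k}^\star = 0$ and $\mathsf{P}$ is empty, and the invariant says the whole of $P$ (the first element of $\mathcal{P}(X,Y)$) is to be reconstructed. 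Lemma~\ref{lem branch} is exactly the step that propagates this invariant: given the invariant, the call to \textsc{firstLCS} fills in the correct tail so that $\mathsf{P}$ becomes the next element $P^\star$ of $\mathcal{P}(X,Y)$, and then \textsc{findBranch} computes $k^\star_{P^\star}$ and $j^\star_{P^\star}$ (or signals termination), restoring the invariant for the following iteration. Termination is handled by the dummy-index convention, and since $\mathcal{P}(X,Y)$ is finite and the outputs strictly increase in lexicographic order, the loop halts after exactly $|\mathcal{P}(X,Y)|$ outputs, each distinct. I would also note that recovering the indices $i$ and $j$ in the loop body (the shortest prefixes of $X$ and $Y$ having $Y[\mathsf{P}[1:\mathsf{k}^\star]]$ as a subsequence) is a straightforward $O(|X||Y|)$-time, $O(1)$-extra-space greedy scan using the $O(1)$-time character-equality oracle, so it does not affect the budget.

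For the resource bounds, the space used by \textsc{enumerateLCS} outside its subroutines is the sequence variable $\mathsf{P}$ of length $L(X,Y)$ plus the $O(1)$-size variables $\mathsf{k}^\star$, $i$, $j$; invoking the results of Sections~\ref{sec firstLCS} and~\ref{sec findBranch}, which guarantee $\textsc{firstLCS}$ and $\textsc{findBranch}$ each run in $O(|X||Y|)$ time and $O(L(X,Y))$ space, the total is $O(L(X,Y))$ as required, and the output format (a length-$L(X,Y)$ index sequence $P$ with $Y[P]$ the LCS) matches the convention fixed in Section~\ref{sec pre}. The delay between consecutive outputs is the cost of one iteration of the \textbf{Repeat} loop, which is the prefix-recovery scan plus one call each to \textsc{firstLCS} and \textsc{findBranch}, hence $O(|X||Y|)$. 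The main obstacle is not in this assembly — it is entirely in the two deferred sections: implementing \textsc{firstLCS} as the variant of Hirschberg's algorithm that returns the specific first element of $\mathcal{P}(\tilde X,\tilde Y)$ (rather than an arbitrary LCS) while still achieving a balanced midpoint split so the recursion solves to $O(|X||Y|)$ rather than $O(|X||Y|\log L)$, and implementing \textsc{findBranch} so that it locates $k^\star_P$ and $j^\star_P$ within the same budget. Once those two sections deliver their stated guarantees, Theorem~\ref{theo enumerateLCS} follows immediately by the invariant argument above.
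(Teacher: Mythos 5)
Your proposal is correct and follows essentially the same route as the paper: the paper's proof of the theorem is a one-line appeal to Lemma~\ref{lem firstLCS} and Lemma~\ref{lem findBranch}, with the inductive loop-invariant argument (via Lemma~\ref{lem branch}) given in the prose immediately preceding the theorem, which is exactly the assembly you spell out. Your version merely makes that invariant and the resource accounting more explicit.
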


\begin{proof}
The theorem follows from Lemma~\ref{lem firstLCS} in Section~\ref{sec firstLCS} and Lemma~\ref{lem findBranch} in Section~\ref{sec findBranch}.
\end{proof}

\subsection{Function \textsc{firstLCS}}\label{sec firstLCS}

\begin{algorithm}[t]
\DontPrintSemicolon
\eIf{$i' = i''$}{
  \ForEach{$j$ from $j'$ to $j''$}{
    \lIf{$X[i'] = Y[j]$}{\Return $j$}
  }
  \Return the empty sequence
}{
  $J^-, J^+ \leftarrow$ the empty sequences; $\bar\imath \leftarrow \lfloor (i' + i'')/2 \rfloor$\;
  \ForEach{$i$ from $i'$ to $\bar\imath$}{
    $l \leftarrow |J^-|$\;
    \ForEach{$j$ from $j''$ down to $j'$}{
      \lIf{$l > 0$ and $J^-[l] = j$}{$l^- \leftarrow l - 1$}
      \lIf{$X[i] = Y[j]$}{$J^-[l + 1] \leftarrow j$}
    }
  }
  \ForEach{$i$ from $i''$ down to $\bar\imath + 1$}{
    $l \leftarrow |J^+|$\;
    \ForEach{$j$ from $j'$ to $j''$}{
      \lIf{$l > 0$ and $J^+[l] = j$}{$q \leftarrow l - 1$}
      \lIf{$X[i] = Y[j]$}{$J^+[l + 1] \leftarrow j$}
    }
  }
  $l^- \leftarrow 0$; $l^+ \leftarrow |J^+|$; $l \leftarrow l^- + l^+$; $\bar\jmath \leftarrow j' - 1$\;
  \ForEach{$j$ from $j'$ to $j''$}{
    \lIf{$J^-[l^- + 1] \leq j$}{$l^- \leftarrow l^- + 1$}
    \lIf{$J^+[l^+] \leq j$}{$l^+ \leftarrow l^+ - 1$}
    \If{$l^- + l^+ > l$}{
      $\bar\jmath \leftarrow j$; $l \leftarrow l^- + l^+$
    }
  }
  \Return $\textsc{firstLCS}(i', \bar\imath, j', \bar\jmath) \circ \textsc{firstLCS}(\bar\imath + 1, i'', \bar\jmath + 1, j'')$
}
\caption{$\textsc{firstLCS}(i', i'', j', j'')$\label{algo firstLCS}}
\end{algorithm}

This section presents $\textsc{firstLCS}(i', i'', j', j'')$ as a recursive function that returns the first element of $\mathcal{P}(\tilde{X}, \tilde{Y})$ in $O(|\tilde{X}||\tilde{Y}|)$ time and $O(L(\tilde{X}, \tilde{Y})$ space, where $\tilde{X} = X[i' : i'']$ and $Y[j' : j'']$.
As mentioned earlier, we adopt as it a variant of Hirschberg's linear-space LCS-finding algorithm~\cite{Hir}.
A pseudo-code of $\textsc{firstLCS}(i', i'', j', j'')$ is given as Algorithm~\ref{algo firstLCS}.

Following Hirschberg's divide-and-conquer rule, when the length of $\tilde{X}$ is more than one, \textsc{firstLCS} halves $\tilde{X}$ into $X' \circ X''$ with $|X'| = \lceil |\tilde{X}|/2 \rceil$ and finds a division of $\tilde{Y}$ into $Y' \circ Y''$ such that $L(X', Y') + L(X'', Y'') = L(\tilde{X}, \tilde{Y})$.
This division of $\tilde{Y}$ allows \textsc{firstLCS} to return $P' \circ P''$ as an LCS of $\tilde{X}$ and $\tilde{Y}$, where $P'$ (resp. $P''$) is the LCS of $X'$ and $Y'$ (resp. $X''$ and $Y''$) returned by the recursive call of itself with argument representing $X'$ and $Y'$ (resp. $X''$ and $Y''$).
A notable characteristic of \textsc{firstLCS} compared to the original Hirschberg's algorithm is that \textsc{firstLCS} adopts the shortest possible one as $Y'$, when dividing $\tilde{Y}$ into $Y' \circ Y''$.
The following lemma guarantees this characteristic.

\begin{lemma}\label{lem midpoint}
For any pair of $X[i' : i'']$ and $Y[j' : j'']$, if $i' < i''$, then $\bar{\jmath}$ is the least index such that $L(X', Y') + L(X'', Y'') = L(\tilde{X}, \tilde{Y})$, where $\tilde{X} = X[i' : i'']$, $\tilde{Y} = Y[j' : j'']$, $X' = X[i' : \bar{\imath}]$, $Y' = Y[j' : \bar{\jmath}]$, $X'' = X[\bar{\imath} + 1 : i'']$, and $Y'' = Y[\bar{\jmath} + 1 : j'']$ with $\bar{\imath}$ and $\bar{\jmath}$ at line 23 of $\textsc{firstLCS}(i', i'', j', j'')$ implemented as Algorithm~\ref{algo firstLCS}, .
\end{lemma}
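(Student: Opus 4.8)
The plan is to isolate two facts about $\textsc{firstLCS}(i',i'',j',j'')$ run on $\tilde X$ and $\tilde Y$ --- one about the arrays built in the two row-scanning loops and one about what the final column-scanning loop extracts from them --- and then to read the lemma off from these. Throughout, for $j'-1\le j\le j''$ write $f(j)=L(X[i':\bar\imath],Y[j':j])=L(X',Y[j':j])$ and $g(j)=L(X[\bar\imath+1:i''],Y[j+1:j''])=L(X'',Y[j+1:j''])$, so that $f$ is non-decreasing with $f(j'-1)=0$ and $g$ is non-increasing with $g(j'')=0$. Fact (i): after the loop over rows $i',\dots,\bar\imath$ the array $J^-$ is strictly increasing with all entries in $[j',j'']$ and $J^-[k]$ equal to the least $j$ with $f(j)\ge k$, and after the loop over rows $i'',\dots,\bar\imath+1$ the array $J^+$ is strictly decreasing with $J^+[k]$ equal to the greatest $j$ such that $L(X'',Y[j:j''])\ge k$; consequently $|J^-|=f(j'')$, $|J^+|=g(j'-1)$, $f(j)=|\{k:J^-[k]\le j\}|$, and $g(j)=|\{k:J^+[k]>j\}|$. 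Fact (ii): the final loop terminates with $\bar\jmath$ equal to the least index of $\{j'-1,j',\dots,j''\}$ at which $f(j)+g(j)$ is maximum, and that maximum equals $L(\tilde X,\tilde Y)$. Granting these, the lemma follows at once: when $\bar\jmath\ge j'$ we get $L(X',Y')+L(X'',Y'')=f(\bar\jmath)+g(\bar\jmath)=L(\tilde X,\tilde Y)$ with no smaller index of $[j',j'']$ having this property, and when $\bar\jmath=j'-1$ this happens exactly because $L(X'',\tilde Y)=L(\tilde X,\tilde Y)$ already, so the empty prefix $Y'$ witnesses the equality and is the least index doing so.

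To prove fact (i) I would use induction on the number of rows processed so far, keeping the stated description of the current array as the invariant. The inductive step is the standard single-character contour update, whose correctness rests on Lemma~\ref{lem L} in the observation that extending a string by one symbol (at the tail for $J^-$, at the head for $J^+$) changes its LCS length against a fixed string by $0$ or $1$ and never decreases it. The subtle point here --- the one I expect to be the main obstacle --- is that Algorithm~\ref{algo firstLCS} rewrites $J^-$ (resp. $J^+$) in place while scanning the columns of the current row, so I must check that whenever the membership test against the current length counter consults an entry of the array, that entry still carries its value from the previous row. This is true because within a single row scan the only write position is one past the current length counter, that counter moves only in one direction, and the columns are visited in the matching direction, so the entry about to be tested has not yet been overwritten in this row; turning this observation into a clean invariant --- together with the customary sentinel reading under which out-of-range comparisons such as $J^-[|J^-|+1]\le j$ and $J^+[0]\le j$ are false --- is the technical heart of the proof.

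To prove fact (ii), granting fact (i), I would carry a second invariant through the final loop: immediately after the iteration for column $j$ one has $l^-=f(j)$ and $l^+=g(j)$. Indeed $l^-$ starts at $0=f(j'-1)$ and, since $J^-$ is increasing and $f$ grows by at most $1$ from one column to the next (Lemma~\ref{lem L}), the test $J^-[l^-+1]\le j$ fires precisely on the columns where $f(j)=f(j-1)+1$; symmetrically $l^+$ starts at $|J^+|=g(j'-1)$ and drops by one precisely on the columns where $g(j)=g(j-1)-1$. Since $\bar\jmath$ is reset to $j$ (and the running record $l$ raised) only when $l^-+l^+$ \emph{strictly} exceeds $l$, and since the initialization $\bar\jmath\leftarrow j'-1$, $l\leftarrow|J^+|$ already records column $j'-1$, the loop leaves $\bar\jmath$ at the least column at which $f+g$ is maximum. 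It remains to identify that maximum with $L(\tilde X,\tilde Y)$. For the upper bound, for every $j$ the concatenation of a common subsequence of $(X',Y[j':j])$ with a common subsequence of $(X'',Y[j+1:j''])$ is a common subsequence of $(\tilde X,\tilde Y)$, so $f(j)+g(j)\le L(\tilde X,\tilde Y)$. For the matching lower bound, fix an LCS $Z$ of $\tilde X$ and $\tilde Y$ together with an embedding of $Z$ into $\tilde X$, let $t$ be the number of symbols of $Z$ sitting at $X$-positions at most $\bar\imath$, and let $j^\ast$ be the $Y$-position of the $t$-th symbol of $Z$ (and $j^\ast=j'-1$ if $t=0$); then the first $t$ symbols of $Z$ form a common subsequence of $(X',Y[j':j^\ast])$ and the remaining $|Z|-t$ form a common subsequence of $(X'',Y[j^\ast+1:j''])$, whence $f(j^\ast)+g(j^\ast)\ge|Z|=L(\tilde X,\tilde Y)$. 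Putting the two facts together yields the lemma.
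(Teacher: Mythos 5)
Your plan is correct and follows essentially the same route as the paper's proof: characterize $J^-$ and $J^+$ as the contour arrays of least/greatest indices achieving each LCS length, show by induction over the final loop that $l^-$ and $l^+$ track $L(X',Y[j':j])$ and $L(X'',Y[j+1:j''])$, and conclude via the fact that the maximum of their sum over $j$ equals $L(\tilde X,\tilde Y)$. The only difference is one of detail: you spell out the in-place update subtlety and prove the max-equals-LCS fact (Hirschberg's splitting argument), both of which the paper simply asserts as straightforward.
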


\begin{proof}
Since $i' < i''$, \textsc{firstLCS} executes lines 6 through 23.
Lines 6 through 16 construct sequences $J^-$ and $J^+$ such that $|J^-| = L(X', \tilde{Y})$ (resp. $|J^+| = L(X'', \tilde{Y})$) and for any index $p$ (resp. $q$) with $1 \leq p \leq |J^-|$ (resp. $1 \leq q \leq |J^+|$), $J^-[p]$ (resp. $J^+[q])$ is the least (resp. greatest) index $j$ with $L(X', Y[j' : j]) = p$ (resp. $L(X'', Y[j : j'']) = q)$ in a straightforward way based on Lemma~\ref{lem L}.
Therefore, induction proves that at line 21 in each iteration of lines 19 through 22, both $l^- = L(X', Y[i' : j])$ and $l^- = L(X'', Y[j + 1 : j''])$.
Since the maximum of $L(X', Y[i' : j]) + L(X'', Y[j + 1 : j''])$ over all indices $j$ with $j' - 1 \leq j \leq j''$ is equal to $L(\tilde{X}, \tilde{Y})$, the lemma holds due to lines 21 and 22.
\end{proof}

The reason for adopting the shortest $Y'$ strictly to divide $\tilde{Y}$ into $Y' \circ Y''$ is that the first element in $\mathcal{P}(\tilde{X}, \tilde{Y})$ is appropriately inherited from parent to child in the recursion tree of \textsc{firstLCS}.
To show this we need the following lemma.

\begin{lemma}\label{lem non-decreasing}
For any pair of contiguous subsequences $\tilde{X}$ of $X$ and $\tilde{Y}$ of $Y$, any element $P$ in $\mathcal{P}(\tilde{X}, \tilde{Y})$, and any index $k$ with $1 \leq k \leq L(\tilde{X}, \tilde{Y})$, $\tilde{P}[k] \leq P[k]$, where $\tilde{P}$ is the first element in $\mathcal{P}(\tilde{X}, \tilde{Y})$.
\end{lemma}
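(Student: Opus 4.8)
The plan is to argue by induction on $L := L(\tilde{X}, \tilde{Y})$. The cases $L = 0$ and $L = 1$ are immediate (for $L=1$ every element of $\mathcal{P}(\tilde X,\tilde Y)$ is a single index, so the lexicographically least is the numerically least). For the inductive step, write $\tilde Y = Y[j' : j'']$ and $\tilde X = X[i':i'']$, set $j_1 := \tilde{P}[1]$ and $c := Y[j_1]$, and let $i_1$ be the least index with $X[i_1] = c$. The first step is a ``peeling'' reduction: with $\hat{X} := X[i_1 + 1 : i'']$ and $\hat{Y} := Y[j_1 + 1 : j'']$ one has $L(\hat X, \hat Y) = L - 1$, and prepending $j_1$ is a lexicographic-order isomorphism from $\mathcal{P}(\hat X, \hat Y)$ onto $\{Q \in \mathcal{P}(\tilde X, \tilde Y) : Q[1] = j_1\}$; both facts follow from Lemma~\ref{lem L}, from $j_1$ (resp.\ $i_1$) being the first occurrence of $c$ in $\tilde Y$ (resp.\ $\tilde X$), and from the fact that shortest-prefix (leftmost) embeddings extend greedily. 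Since $\tilde P$ is the first element of $\mathcal{P}(\tilde X, \tilde Y)$ and $\tilde P[1] = j_1$, this makes $\tilde P[2 : L]$ the first element of $\mathcal{P}(\hat X, \hat Y)$. Consequently, for any $P \in \mathcal{P}(\tilde X, \tilde Y)$ with $P[1] = j_1$, the sequence $P[2:L]$ lies in $\mathcal{P}(\hat X, \hat Y)$, and the induction hypothesis applied to $(\hat X, \hat Y)$ yields $\tilde P[k] \le P[k]$ for every $k$.

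It remains to treat $P \in \mathcal{P}(\tilde X, \tilde Y)$ with $P[1] > j_1$, and this is the heart of the matter. Here $Y[P[1]] \ne c$, for otherwise $P[1]$, being the start of a shortest-prefix embedding, would equal the first occurrence $j_1$ of $c$ in $\tilde Y$. The plan is to produce $P^\diamond \in \mathcal{P}(\tilde X, \tilde Y)$ with $P^\diamond[1] = j_1$ and $P^\diamond[k] \le P[k]$ for all $k$; applying the previous paragraph to $P^\diamond$ then gives $\tilde P[k] \le P^\diamond[k] \le P[k]$, finishing the proof. The candidate is the shortest-prefix embedding into $\tilde Y$ of the length-$L$ string $c \circ Y[P[1 : L-1]]$ obtained by prepending $c$ to the string spelled by all but the last position of $P$; because the shortest-prefix embedding of $Y[P]$ into $\tilde Y$ begins at $P[1] > j_1$, this candidate turns out to be exactly $j_1 \circ P[1] \circ \cdots \circ P[L-1]$, which is componentwise $\le P$. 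The one nontrivial point is that $c \circ Y[P[1:L-1]]$ is genuinely an LCS of $\tilde X$ and $\tilde Y$; it is patently a common subsequence of $\tilde Y$, so what must be shown is that it is a common subsequence of $\tilde X$.

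I expect this last verification to be the main obstacle. One must exhibit, inside $\tilde X$, an occurrence of $c$ followed by a subsequence spelling $Y[P[1:L-1]]$. Fixing an embedding of $Y[P]$ into $\tilde X$, if the first occurrence $i_1$ of $c$ lies to the left of its initial position this is easy; the delicate case is when $i_1$ falls strictly inside the embedding, where one has to reroute --- split the embedding of $Y[P]$ around position $i_1$, insert the occurrence of $c$ there, and discard the rightmost matched position --- and then argue, using that $c$ begins an LCS (namely $Y[\tilde P]$) and hence occurs early enough in $\tilde X$ to leave room for $L - 1$ further matches, that the resulting index list is still increasing and still a common subsequence of $\tilde Y$. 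An essentially equivalent way to organize the whole argument is a single exchange step: if the lemma failed, take $P$ and the least index $k$ with $\tilde P[k] > P[k]$, note $\tilde P[\ell] \le P[\ell]$ for $\ell < k$ (so $\tilde P[k-1] < P[k]$), and show that the shortest-prefix embedding into $\tilde Y$ of $Y[\tilde P[1:k-1]] \circ Y[P[k:L]]$ is an element of $\mathcal{P}(\tilde X, \tilde Y)$ that is lexicographically smaller than $\tilde P$, contradicting minimality; again the only non-routine point is that $Y$ restricted to these indices embeds into $\tilde X$.
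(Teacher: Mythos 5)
Your main construction contains a false step. For $P$ with $P[1] > j_1$ you claim that $c \circ Y[P[1:L-1]]$ is a common subsequence of $\tilde{X}$, so that $P^\diamond = j_1 \circ P[1] \circ \cdots \circ P[L-1]$ lies in $\mathcal{P}(\tilde{X}, \tilde{Y})$. This already fails for $\tilde{X} = \texttt{acb}$ and $\tilde{Y} = \texttt{cab}$: here $L = 2$, the two LCSs are $\texttt{cb}$ and $\texttt{ab}$, and $\mathcal{P}(\tilde{X}, \tilde{Y}) = (1 \circ 3,\ 2 \circ 3)$, so $\tilde{P} = 1 \circ 3$, $c = \texttt{c}$, $j_1 = 1$; for $P = 2 \circ 3$ your candidate string is $c \circ Y[P[1]] = \texttt{ca}$, which is not a subsequence of $\texttt{acb}$. (The lemma itself of course holds here, since $1 \circ 3 \leq 2 \circ 3$ componentwise.) The rerouting you sketch cannot repair this: in $\texttt{acb}$ the unique $\texttt{c}$ sits strictly between the two matched positions of $\texttt{ab}$, and no exchange produces an occurrence of $\texttt{c}$ followed by an $\texttt{a}$. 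So the ``main obstacle'' you flag in the inductive step is not merely an obstacle but an impossibility, and the induction collapses.

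Your closing remark --- contradict minimality of $\tilde{P}$ by exchanging at the least $k$ with $\tilde{P}[k] > P[k]$ --- is essentially the paper's argument, but you leave unproved exactly the point where all the work lies, namely that $Y[\tilde{P}[1:k-1] \circ P[k:L]]$ embeds in $\tilde{X}$; this does not follow unconditionally, as the example above shows for the analogous claim in your main line. The paper obtains it through a dichotomy you are missing: letting $X'$ be the shortest prefix of $\tilde{X}$ with $Y[\tilde{P}[1:k]]$ as a subsequence, either $Y[P[1:k] \circ \tilde{P}[k]]$ is a subsequence of $X'$ --- in which case appending $Y[\tilde{P}[k+1:L]]$, which embeds in the complementary suffix of $\tilde{X}$, produces a common subsequence of length $L+1$ and contradicts $L = L(\tilde{X}, \tilde{Y})$ --- or it is not, and only in that case is $\tilde{P}[1:k-1] \circ P[k:L]$ shown to be an element of $\mathcal{P}(\tilde{X}, \tilde{Y})$ preceding $\tilde{P}$. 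The peeling reduction in your first paragraph is fine but is not where the difficulty lives; to complete the proof you should drop the induction and the $P^\diamond$ construction and carry out this two-case exchange argument.
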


\begin{proof}
Assume for contradiction that $k$ is the least index with $\tilde{P}[k] > P[k]$ and let $X'$ be the shortest prefix of $\tilde{X}$ with $Y[\tilde{P}[1 : k]]$ as a subsequence.
If $Y[P[1 : k] \circ \tilde{P}[k]]$ is a subsequence of $X'$, then $Y[P[1 : k] \circ \tilde{P}[k : L(\tilde{X}, \tilde{Y})]]$ is a common subsequence of $\tilde{X}$ and $\tilde{Y}$ of length $L(\tilde{X}, \tilde{Y}) + 1$, a contradiction; otherwise, $\tilde{P}[1 : k - 1] \circ P[k : L(\tilde{X}, \tilde{Y})]$ is an element in $\mathcal{P}(\tilde{X}, \tilde{Y})$ before $\tilde{P}$, also a contradiction.
\end{proof}

Lemma~\ref{lem non-decreasing} provides the following property of the recursion tree of \textsc{firstLCS}.

\begin{lemma}\label{lem inheritance}
For any pair of $X[i' : i'']$ and $Y[j' : j'']$, $\tilde{P} = P' \circ P''$, where $\tilde{X}$, $\tilde{Y}$, $X'$, $Y'$, $X''$, and $Y''$ are the same as in Lemma~\ref{lem midpoint} and $\tilde{P}$, $P'$, and $P''$ are the first elements of $\mathcal{P}(\tilde{X}, \tilde{Y})$, $\mathcal{P}(X', Y')$, and $\mathcal{P}(X'', Y'')$, respectively.
\end{lemma}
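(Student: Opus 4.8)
The plan is to show that the first element $\tilde{P}$ of $\mathcal{P}(\tilde{X},\tilde{Y})$ splits at index $\bar\imath$ (in the $X$-coordinate) exactly along the division $\tilde{Y}=Y'\circ Y''$ chosen at line~23, and that its two halves are the respective lexicographically first LCS-position sequences. Write $L = L(\tilde X,\tilde Y)$ and let $m = L(X',Y')$, so by Lemma~\ref{lem midpoint} we have $m + L(X'',Y'') = L$ and $\bar\jmath$ is the \emph{least} index making this equality hold. The first thing I would establish is that $\tilde{P}[1:m]$ is an LCS-position sequence of $X'$ and $Y'$ and $\tilde{P}[m+1:L]$ (after shifting indices) is an LCS-position sequence of $X''$ and $Y''$. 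The "shortest-prefix" condition in Definition~\ref{def P} is inherited automatically on each side because prefixes of $\tilde Y$ restrict to prefixes of $Y'$ (for the first $m$ positions, all of which lie in $[j',\bar\jmath]$) and the complementary statement holds for $Y''$; the only real content is that $\tilde{P}[m]\le\bar\jmath$, i.e. that the split of $\tilde P$ respects the split of $\tilde Y$. This is where the minimality of $\bar\jmath$ from Lemma~\ref{lem midpoint} enters: if $\tilde P$ placed more than $m$ of its indices inside $Y'$, or fewer, one could exhibit a shorter valid $\bar\jmath$ or violate $m+L(X'',Y'')=L$.

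Next I would argue that $P'$, the first element of $\mathcal{P}(X',Y')$, equals $\tilde{P}[1:m]$. By Lemma~\ref{lem non-decreasing} applied to $X',Y'$, we have $P'[k]\le\tilde P[k]$ for every $k\le m$, since $\tilde P[1:m]$ is \emph{some} element of $\mathcal{P}(X',Y')$ by the previous paragraph. For the reverse direction I would show that $P'\circ P''$ (with $P''$ suitably index-shifted) is an element of $\mathcal{P}(\tilde X,\tilde Y)$: it has length $m + L(X'',Y'') = L$, its positions are increasing across the boundary because $P'$ ends at or before $\bar\jmath$ and $P''$ starts after $\bar\jmath$, and the shortest-prefix property holds coordinatewise by the same prefix-restriction argument as above. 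Since $\tilde P$ is the lexicographically first element of $\mathcal{P}(\tilde X,\tilde Y)$, we get $\tilde P[k]\le P'[k]$ for $k\le m$, and combined with Lemma~\ref{lem non-decreasing} this forces $P' = \tilde P[1:m]$. An entirely symmetric argument on the suffix side — using the analogue of Lemma~\ref{lem non-decreasing} together with the fact that $\tilde P[1:m]\circ P''$ is an element of $\mathcal{P}(\tilde X,\tilde Y)$ that agrees with $\tilde P$ on its first $m$ coordinates — gives $P'' = \tilde P[m+1:L]$. Concatenating yields $\tilde P = P'\circ P''$.

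The step I expect to be the main obstacle is the claim that $\tilde P[m]\le\bar\jmath$, i.e. that the global leftmost LCS-position sequence respects precisely the split point computed at line~23 rather than some other optimal split. It is tempting to think this is immediate, but it genuinely uses two ingredients together: (i) the leftmost position sequence pushes indices as far left as possible (Lemma~\ref{lem non-decreasing}), and (ii) $\bar\jmath$ is chosen as the \emph{smallest} split index achieving $L(X',Y')+L(X'',Y'')=L$ (Lemma~\ref{lem midpoint}). Concretely, I would suppose $\tilde P$ has exactly $r$ of its indices in $[j',\bar\jmath]$ and argue $r=m$: if $r<m$ then the prefix $X[i':\bar\imath]$ would have an LCS with $Y[j':\bar\jmath]$ longer than what $\tilde P$ uses there, so one could rearrange $\tilde P$ to get a lexicographically earlier element — contradiction; if $r>m$ then $\tilde P[1:r]$ is a common subsequence of $X'$ and $Y'$ of length $r>m=L(X',Y')$, impossible. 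Once $r=m$ is pinned down the rest is bookkeeping with index shifts, which I would not spell out in detail.
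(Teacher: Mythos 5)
Your overall skeleton matches the paper's: locate the index at which $\tilde{P}$ splits across the division of $\tilde{Y}$, show that the split occurs exactly at $\bar{\jmath}$, and then use Lemma~\ref{lem non-decreasing} together with lexicographic minimality to identify the two halves with $P'$ and $P''$. That second half of your argument is essentially the paper's and is fine (modulo routine care about whether $P' \circ P''$ is literally the leftmost embedding of its string, an issue the paper also glosses over). The genuine gap is in the step you yourself flag as the main obstacle. Your argument that $r > m$ is impossible (in your notation, $r$ the number of indices of $\tilde{P}$ in $[j', \bar{\jmath}]$ and $m = L(X', Y')$) rests on the claim that $Y[\tilde{P}[1 : r]]$ is then a common subsequence of $X'$ and $Y'$. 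Having its indices in $[j', \bar{\jmath}]$ only makes it a subsequence of $Y'$; there is no reason the first $r$ characters of the LCS embed into $X' = X[i' : \bar{\imath}]$ rather than spilling into $X''$. The point where $\tilde{P}$'s embedding into $X$ crosses $\bar{\imath}$ and the point where its indices cross $\bar{\jmath}$ are a priori two different splits, and proving that they coincide is essentially the content of the lemma, so this step assumes what is to be proved. The $r < m$ case (``one could rearrange $\tilde{P}$'') is likewise not an argument.

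The paper avoids this by parametrizing the split on the $X$ side: it sets $\bar{k}$ to be the least $k$ such that $Y[\tilde{P}[k + 1 : L(\tilde{X}, \tilde{Y})]]$ is a subsequence of $X''$, so that $Y[\tilde{P}[1 : \bar{k}]]$ is a subsequence of $X'$ by construction; it then shows that $j = \tilde{P}[\bar{k}]$ attains $L(X', Y[j' : j]) + L(X'', Y[j + 1 : j'']) = L(\tilde{X}, \tilde{Y})$ while no smaller $j$ does (ruling out smaller $j$ with $L(X', Y[j' : j]) = \bar{k}$ via Lemma~\ref{lem non-decreasing}), and finally invokes the minimality of $\bar{\jmath}$ from Lemma~\ref{lem midpoint} to conclude $\bar{\jmath} = \tilde{P}[\bar{k}]$. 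If you want to keep your $Y$-side counting, both inequalities must still be routed through these two ingredients rather than through pure counting: $r \geq m$ follows because the leftmost embedding of $Y[P'] \circ Y[P'']$ in $\tilde{Y}$ has its $m$th index at most $\bar{\jmath}$, whence $\tilde{P}[m] \leq \bar{\jmath}$ by Lemma~\ref{lem non-decreasing}; and $r \leq m$ needs an argument that a strictly smaller split index than $\bar{\jmath}$ would otherwise achieve the optimal sum, contradicting Lemma~\ref{lem midpoint}. As written, the crux of your proof does not go through.
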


\begin{proof}
Let $\bar{k}$ be the least index $k$ such that $Y[\tilde{P}[k + 1 : L(\tilde{X}, \tilde{Y})]]$ is a subsequence of $X''$, so that $Y[\tilde{P}[1 : \bar{k}]]$ is a subsequence of $X'$.
Hence, $L(X', Y[j' : \tilde{P}[\bar{k}]]) = \bar{k}$ and $L(X'', Y[\tilde{P}[\bar{k}] + 1 : j'']) = L(\tilde{X}, \tilde{Y}) - \bar{k}$.
For any index $j$ with $j' - 1 \leq j \leq \tilde{P}[\bar{k}] - 1$, if $L(X', Y[j' : j]) = \bar{k}$, then $P \circ \tilde{P}[\bar{k} + 1 : L(\tilde{X}, \tilde{Y})]$ is an element in $\mathcal{P}(\tilde{X}, \tilde{Y})$, where $P$ is the first element in $\mathcal{P}(X', Y[j' : j])$.
This contradicts Lemma~\ref{lem non-decreasing} because $P[\bar{k}] < \tilde{P}[\bar{k}]$.
Thus, $L(X', Y[j' : j]) + L(X'', Y[j + 1 : j'']) < L(\tilde{X}, \tilde{Y})$, implying that $Y' = Y[j' : \tilde{P}[\bar{k}]]$ and $Y'' = Y[\tilde{P}[\bar{k}] + 1 : j'']$.
Since $\tilde{P}[1 : \bar{k}]$ is an element in $\mathcal{P}(X', Y')$, if $\tilde{P}[1 : \bar{k}] \neq P'$, then $P' \circ \tilde{P}[\bar{k} + 1 : L(\tilde{X}, \tilde{Y})]$ appears in $\mathcal{P}(\tilde{X}, \tilde{Y})$ before $\tilde{P}$, a contradiction.
Thus, $P' = \tilde{P}[1 : \bar{k}]$.
By symmetry, $P'' = \tilde{P}[\bar{k} + 1 : L(\tilde{X}, \tilde{Y})]$.
\end{proof}

Due to the property of the recursion tree guaranteed by Lemma~\ref{lem inheritance}, together with the time and space efficiency of the original Hirschberg's algorithm~\cite{Hir}, \textsc{firstLCS} works as desired.

\begin{lemma}\label{lem firstLCS}
For any pair of $\tilde{X} = X[i' : i'']$ and $\tilde{Y} = Y[j' : j'']$, $\textsc{firstLCS}(i', i'', j', j'')$ implemented as Algorithm~\ref{algo firstLCS} returns the first element of $\mathcal{P}(\tilde{X}, \tilde{Y})$ in $O(|\tilde{X}||\tilde{Y}|)$ time and $O(L(\tilde{X}, \tilde{Y}))$ space.
\end{lemma}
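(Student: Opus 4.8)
The plan is to prove Lemma~\ref{lem firstLCS} by induction on $|\tilde X| = i'' - i' + 1$, establishing correctness and the resource bounds simultaneously. For the \textbf{base case} $i' = i''$, the string $\tilde X$ is a single character, so $L(\tilde X, \tilde Y) \in \{0, 1\}$; lines 1--5 scan $Y[j' : j'']$ left to right and return the first matching index, which is exactly the (unique) first element of $\mathcal{P}(\tilde X, \tilde Y)$ — or the empty sequence when no match exists. This runs in $O(|\tilde Y|) = O(|\tilde X||\tilde Y|)$ time and $O(1)$ space, and $O(1) = O(L(\tilde X, \tilde Y))$ fails only when $L = 0$, in which case the output sequence is empty and we can charge the scan to the caller — or simply note the convention that these bounds are understood up to the trivial additive slack needed to run a single loop. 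For the \textbf{inductive step} $i' < i''$, I would argue in three parts: (i) the midpoint $\bar\jmath$ computed at line~23 satisfies $L(X', Y') + L(X'', Y'') = L(\tilde X, \tilde Y)$ — this is exactly Lemma~\ref{lem midpoint}; (ii) by Lemma~\ref{lem inheritance}, the first element $\tilde P$ of $\mathcal{P}(\tilde X, \tilde Y)$ decomposes as $P' \circ P''$ where $P'$ and $P''$ are the first elements of $\mathcal{P}(X', Y')$ and $\mathcal{P}(X'', Y'')$; (iii) by the induction hypothesis applied to the two strictly smaller subproblems (each has $\lceil|\tilde X|/2\rceil < |\tilde X|$ or $\lfloor|\tilde X|/2\rfloor < |\tilde X|$ rows, using $i' < i''$), the two recursive calls return exactly $P'$ and $P''$, so their concatenation returned at line~24 equals $\tilde P$.

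For the \textbf{time bound} in the inductive step, I would observe that lines 6--23 — outside the recursive calls — take $O(|\tilde X||\tilde Y|)$ time: the double loops at lines 7--12 and 13--18 each touch every pair $(i, j)$ with $i$ in one half of the row range and $j \in [j', j'']$ at most a constant number of times (the inner maintenance of the pointer $l$ is $O(1)$ per cell), and the merge loop at lines 19--22 is a single pass over $[j', j'']$. Letting $m = |\tilde X|$ and $w = |\tilde Y|$, and letting $w_1, w_2$ be the widths $\bar\jmath - j' + 1$ and $j'' - \bar\jmath$ of the two sub-rectangles, the recurrence is $T(m, w) = O(mw) + T(\lceil m/2\rceil, w_1) + T(\lfloor m/2\rfloor, w_2)$ with $w_1 + w_2 = w + 1$. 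Because the row count halves at each level while the total width summed across a level stays $O(w)$ (the $+1$ per split contributes at most $O(m)$ extra total, absorbed into $O(mw)$ since $w \ge 1$), the work per level is $O(mw)$ and the number of levels is $O(\log m)$ — but in fact one gets $O(mw)$ overall by the standard Hirschberg argument: $\sum_{\text{level } t} O((m/2^t) \cdot w) = O(mw)$, since only the row dimension is geometrically decreasing and the widths at a fixed level are disjoint sub-intervals. This gives $T(m, w) = O(mw) = O(|\tilde X||\tilde Y|)$.

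For the \textbf{space bound}, the delicate point is that the sequences $J^-$ and $J^+$ built at lines 6--18 have lengths $L(X', \tilde Y)$ and $L(X'', \tilde Y)$ respectively, and a priori $L(X', \tilde Y)$ could be as large as $|X'| = \lceil|\tilde X|/2\rceil$, which need not be $O(L(\tilde X, \tilde Y))$. I would resolve this by noting that $L(X', \tilde Y) \le L(X', Y') + (\text{something}) $ — more carefully, the relevant bound is $L(X', \tilde Y) \le L(\tilde X, \tilde Y)$ is \emph{not} generally true, so instead I would point out that $|J^-| = L(X', \tilde Y)$ and $|J^+| = L(X'', \tilde Y)$ are each at most $L(\tilde X, \tilde Y)$ whenever $\tilde Y$ is itself a subsequence-rich enough window — and if this fails, it signals that the claimed $O(L)$ space bound must be interpreted as $O(L(\tilde X, \tilde Y) + \text{(local match counts)})$; the cleanest route, which I expect the paper takes, is to bound $|J^-| + |J^+| \le L(\tilde X, \tilde Y) + O(1)$ by showing the merge at lines 19--22 realizes the sum and using Lemma~\ref{lem L}-style monotonicity. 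Granting that, the two arrays occupy $O(L(\tilde X, \tilde Y))$ space, they are discarded before line~24, and the recursion depth is $O(\log|\tilde X|)$ with each stack frame holding $O(1)$ scalars plus the returned subsequence of total length $L(\tilde X, \tilde Y)$ across the whole tree — so peak space is $O(L(\tilde X, \tilde Y))$. I expect the \textbf{main obstacle} to be precisely this space accounting for $|J^-|$ and $|J^+|$: showing that the lengths of the computed dominance sequences do not exceed (up to constants) the LCS length of the current subproblem, rather than the larger half-row length, which is what separates this from a naive $O(n)$-space Hirschberg implementation and is the crux of the ``linear in $L$'' guarantee.
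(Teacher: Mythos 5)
Your correctness argument and your time analysis follow exactly the route the paper takes, just with the details written out: the base case at lines 1--5, the inductive step via Lemma~\ref{lem midpoint} and Lemma~\ref{lem inheritance}, and the standard Hirschberg recurrence in which the row count halves while the widths at each level of the recursion tree partition $[j', j'']$. (One harmless slip: since $Y' = Y[j' : \bar\jmath]$ and $Y'' = Y[\bar\jmath + 1 : j'']$ tile $\tilde Y$, you have $w_1 + w_2 = w$, not $w + 1$.) The paper compresses all of this into two sentences, so on these points you are simply filling in what the author left implicit.

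Your space discussion, however, contains a genuine error, and it is precisely at the spot you single out as ``the crux.'' You assert that the inequality $L(X', \tilde Y) \leq L(\tilde X, \tilde Y)$ ``is not generally true'' and then construct an elaborate workaround. The inequality \emph{is} true, and trivially so: $X'$ is a prefix of $\tilde X$, so every common subsequence of $X'$ and $\tilde Y$ is a common subsequence of $\tilde X$ and $\tilde Y$, whence $|J^-| = L(X', \tilde Y) \leq L(\tilde X, \tilde Y)$, and symmetrically $|J^+| = L(X'', \tilde Y) \leq L(\tilde X, \tilde Y)$. The obstacle you flag therefore dissolves in one line, and your proposed substitute bound $|J^-| + |J^+| \leq L(\tilde X, \tilde Y) + O(1)$ is both unnecessary and false in general (the sum can approach $2L(\tilde X, \tilde Y)$, which is still $O(L)$ and all you need). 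The one point in the space accounting that genuinely requires care --- and which neither your proposal nor the paper's one-sentence appeal to Hirschberg addresses --- is the recursion stack: its depth is $\Theta(\log |\tilde X|)$, which is not $O(L(\tilde X, \tilde Y))$ when $L$ is small, so one must prune subproblems with zero LCS length (or make an equivalent adjustment) for the stated $O(L(\tilde X, \tilde Y))$ bound to hold literally.
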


\begin{proof}
Induction proves that $\textsc{firstLCS}(i', i'', j', j'')$ returns the first element of $\mathcal{P}(\tilde{X}, \tilde{Y})$ based on Lemma~\ref{lem inheritance} and the fact that if $i' = i''$, then lines 2 through 4 of Algorithm~\ref{algo firstLCS} return the first element of $\mathcal{P}(\tilde{X}, \tilde{Y})$.
By an  argument analogous to the original Hirschberg's algorithm~\cite{Hir}, the time and space required to perform the recursive execution of $\textsc{firstLCS}(i', i'', j', j'')$ are as in the lemma.
\end{proof}

\subsection{Procedure \textsc{findBranch}}\label{sec findBranch}

For any element $P$ in $\mathcal{P}(X, Y)$, \textsc{findBranch} determines existing $k^\star_P$ and $j^\star_P$ without knowing the element $P^\star$ in $\mathcal{P}(X, Y)$ just after $P$ in advance.
To design \textsc{findBranc} as such a procedure, we use the following relationship between elements in $\mathcal{P}(X, Y)$ that share a prefix.

\begin{lemma}\label{lem next}
For any element $P$ in $\mathcal{P}(X, Y)$, if there exist indices $k$ with $1 \leq k \leq L(X, Y)$ and $j^\star$ with $P[k] + 1 \leq j^\star \leq |Y|$ such that $L(X[Q[k - 1] + 1 : i^\star], Y[P[k - 1] + 1 : j^\star] = 1$ and $L(X[i^\star + 1 : |X|], Y[j^\star + 1 : |Y|]) = L(X, Y) - k$, where $X[1 : Q[k - 1]]$ (resp. $X[1 : i^\star]$) is the shortest prefix of $X$ witsh $Y[P[1 : k - 1]]$ (resp. $Y[P[1 : k - 1] \circ j^\star]$) as a subsequence, then $k^\star_P$ is the greatest such $k$ and $j^\star_P$ is the least such $j^\star$ with $k = k^\star_P$; otherwise, $P$ is the last element in $\mathcal{P}(X, Y)$.
\end{lemma}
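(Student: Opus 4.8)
The plan is to characterize, purely in terms of $P$ and the LCS-length function $L$, when $\mathcal{P}(X,Y)$ has an element after $P$ and, when it does, to pin down $k^\star_P$ and $j^\star_P$ via their defining extremal properties from Definition~\ref{def branch}. By Lemma~\ref{lem branch}, the successor $P^\star$ (if it exists) is obtained by keeping a prefix $P[1:k-1]$, replacing $P[k]$ by a strictly larger index $j^\star$, and then appending the first element of $\mathcal{P}(\tilde X,\tilde Y)$ for the induced suffix pair. For this to yield a valid LCS-position sequence, two things must hold: first, $Y[P[1:k-1]\circ j^\star]$ must still be an LCS-position prefix, which (given that $Y[P[1:k-1]]$ already is, with witness prefix $X[1:Q[k-1]]$) amounts to saying that appending the single character $Y[j^\star]$ costs exactly one more unit of $X$, i.e. $L(X[Q[k-1]+1:i^\star],Y[P[k-1]+1:j^\star])=1$ where $X[1:i^\star]$ is the shortest prefix of $X$ absorbing $Y[P[1:k-1]\circ j^\star]$; second, the remaining suffixes must still admit an LCS of the residual length $L(X,Y)-k$, i.e. $L(X[i^\star+1:|X|],Y[j^\star+1:|Y|])=L(X,Y)-k$. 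These are exactly the two conditions in the statement, so the first task is to prove the equivalence: such a pair $(k,j^\star)$ exists if and only if $P$ is not the last element of $\mathcal{P}(X,Y)$.

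First I would prove the "only if'' direction of that equivalence: given any pair $(k,j^\star)$ satisfying the two length conditions with $j^\star>P[k]$, the sequence $P[1:k-1]\circ j^\star\circ R$, where $R$ is the first element of $\mathcal{P}(X[i^\star+1:|X|],Y[j^\star+1:|Y|])$ shifted into $Y$-coordinates, is a genuine element of $\mathcal{P}(X,Y)$ — one checks it has the right length $L(X,Y)$, that each prefix $Y[1:P[m]]$ is shortest (using that $X[1:i^\star]$ is the shortest prefix absorbing $Y[P[1:k-1]\circ j^\star]$, plus the corresponding "shortest'' property built into $\mathcal{P}$ of the suffix), and that it is lexicographically after $P$ because it agrees with $P$ on the first $k-1$ coordinates and then $j^\star>P[k]$. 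Conversely, if $P$ is not last, take $P^\star$ to be its successor; set $k=k^\star_P$ and $j^\star=j^\star_P$; then Definition~\ref{def branch} and Lemma~\ref{lem branch} directly give $P^\star[1:k]=P[1:k-1]\circ j^\star$ and force the two length conditions (the "$=1$'' because $Y[P^\star[1:k]]$ is an LCS-position prefix whose $k$-th symbol lands at exactly the shortest prefix $X[1:i^\star]$, and the residual "$=L(X,Y)-k$'' because the tail of $P^\star$ is an LCS-position sequence of the residual suffix pair). This establishes existence and the dichotomy.

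Next I would prove the extremal identifications. Suppose $P$ is not last and $P^\star$ is its successor. To see $k^\star_P$ is the \emph{greatest} $k$ for which a valid $(k,j^\star)$ exists: if some valid pair existed with $k>k^\star_P$, the construction above would produce an element of $\mathcal{P}(X,Y)$ agreeing with $P$ on the first $k-1\geq k^\star_P$ coordinates hence with $P^\star$ too (since $P$ and $P^\star$ agree up to index $k^\star_P-1$, and at index $k^\star_P$ our constructed element would still match $P[k^\star_P]$ whereas $P^\star[k^\star_P]=j^\star_P>P[k^\star_P]$), making it lexicographically strictly between $P$ and $P^\star$ — contradicting that $P^\star$ is the immediate successor. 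For the \emph{least} $j^\star$ with $k=k^\star_P$: any valid $j^\star<j^\star_P$ would, via the construction, give an element after $P$ that at coordinate $k^\star_P$ has value $j^\star<j^\star_P=P^\star[k^\star_P]$, again lexicographically strictly between $P$ and $P^\star$, a contradiction; and $j^\star_P$ itself is valid by the "only if'' direction. Finally, one should note the well-definedness subtlety that the $i^\star$ in the hypothesis is uniquely determined once $k$ and $j^\star$ are fixed, so the conditions are unambiguous.

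The main obstacle I expect is the careful verification of the two "shortest prefix'' (leftmost) bookkeeping conditions in the forward construction — namely, showing that gluing $P[1:k-1]$, the new index $j^\star$, and the first element $R$ of $\mathcal{P}$ of the residual suffix pair really produces a sequence all of whose prefixes are shortest-absorbing in $X$, so that it lies in $\mathcal{P}(X,Y)$ rather than merely being some LCS-position-ish sequence. This is where the definition of $i^\star$ as the \emph{shortest} prefix absorbing $Y[P[1:k-1]\circ j^\star]$ is essential, and where one must invoke that $R$ inherits the analogous leftmost property from the definition of $\mathcal{P}$ on the suffix pair; the monotonicity packaged in Lemma~\ref{lem non-decreasing} may be needed to rule out a "hidden'' lexicographically-earlier competitor. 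Everything else is a routine application of Lemma~\ref{lem L}, Lemma~\ref{lem branch}, and Definition~\ref{def branch}.
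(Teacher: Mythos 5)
Your overall architecture (forward construction of $P[1:k^\star-1]\circ j^\star\circ R$, converse via the actual successor, then the two extremal identifications by exhibiting an element strictly between $P$ and $P^\star$) is a reasonable way to expand the paper's proof, which consists of the single sentence that the lemma follows from Definitions~\ref{def P} and~\ref{def branch}; your extremal arguments in the third paragraph are sound. The problem sits in the forward direction, at exactly the point you flag as the main obstacle and then wave past. You claim the construction yields ``a sequence all of whose prefixes are shortest-absorbing in $X$,'' citing that $X[1:i^\star]$ is the shortest prefix of $X$ absorbing $Y[P[1:k-1]\circ j^\star]$. But the minimality required by Definition~\ref{def P} is minimality of a prefix of $Y$, not of $X$: at index $k$ you must show that $Y[1:j^\star]$ is the shortest prefix of $Y$ containing $Y[P[1:k-1]]\circ Y[j^\star]$ as a subsequence, i.e., that the character $Y[j^\star]$ does not occur in $Y[P[k-1]+1:j^\star-1]$. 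Neither of the two length hypotheses gives you this, and the implication you assert is false as literally stated: take $X=\mathtt{ab}$, $Y=\mathtt{aa}$, $P=(1)$, $k=1$, $j^\star=2$. Then $Q[0]=0$, $i^\star=1$, $L(X[1:1],Y[1:2])=1$, and $L(X[2:2],Y[3:2])=0=L(X,Y)-1$, so both conditions hold; yet $(2)$ is not an LCS-position sequence and $P$ is the last (indeed only) element of $\mathcal{P}(X,Y)$.

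Closing this gap requires a genuine argument, not bookkeeping. First, the condition $L(X[Q[k-1]+1:i^\star],Y[P[k-1]+1:j^\star])=1$ forces $i^\star\le Q[k]$, since $i^\star>Q[k]$ would make $(Q[k],P[k])$ and $(i^\star,j^\star)$ a common subsequence of length two; the degenerate case $i^\star=Q[k]$, which occurs exactly when $Y[j^\star]=Y[P[k]]$, is the source of the counterexample above and must be excluded --- this is what the pseudo-code of \textsc{findBranch} implicitly does by initializing $i^\star$ to $Q[k]-1$ and never letting it reach $Q[k]$, a restriction that appears nowhere in your argument (nor, arguably, in the lemma statement). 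Second, once $i^\star<Q[k]$, an occurrence of the character $Y[j^\star]$ at a position $j_0$ with $P[k-1]<j_0<P[k]$ would give the common subsequence $Y[P[1:k-1]]\circ Y[j_0]\circ Y[P[k:L(X,Y)]]$ of length $L(X,Y)+1$, embeddable in $X$ at positions $Q[1],\dots,Q[k-1],i^\star,Q[k],\dots,Q[L(X,Y)]$, a contradiction; and an occurrence at $j_0$ with $P[k]<j_0<j^\star$ contradicts the minimality of $j^\star$, because $j_0$ satisfies the same two conditions with the same $i^\star$. Only after this three-way case analysis does the constructed sequence lie in $\mathcal{P}(X,Y)$, and without it both the existence claim and the ``least such $j^\star$'' identification remain unproven.
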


\begin{proof}
The lemma follows from Definitions~\ref{def P} and \ref{def branch}.
\end{proof}

\begin{algorithm}[t]
\DontPrintSemicolon
$Q \leftarrow$ the empty sequence\;
\ForEach{$i$ from $1$ to $|X|$}{
\If{$|Q| < |\mathsf{P}|$ and $X[i] = Y[\mathsf{P}[|Q| + 1]]$}{
append $i$ to $Q$
}
}
$i^\star \leftarrow |X|$; $J \leftarrow$ the empty sequence; $\mathsf{k}^\star \leftarrow$ a dummy index\;
\ForEach{$k$ from $|\mathsf{P}|$ down to $1$}{
\lWhile{$i^\star \geq Q[k]$}{$\textsc{decI}$}
\ForEach{$j^\star$ from $\mathsf{P}[k] + 1$ to $|Y|$}{
$i \leftarrow Q[k - 1]$\;
\lRepeat{$i = i^\star + 1$ or $X[i] = Y[j^\star]$}{$i \leftarrow i + 1$}
\If{$i \leq i^\star$}{
\lWhile{$i^\star \geq i + 1$}{\textsc{decI}}
\eIf{$|J| \geq |\mathsf{P}| - k$ and $j^\star + 1 \leq J[|\mathsf{P}| - k]$}{
$\mathsf{k}^\star \leftarrow k$; $\mathsf{P}[\mathsf{k}^\star] \leftarrow j^\star$; halt
}{
\textsc{decI}
}
}
}
}
\Procedure{$\textsc{decI}$}{
$i^\star \leftarrow i^\star - 1$; $l \leftarrow |J|$\;
\ForEach{$j$ from $1$ to $|Y|$}{
\lIf{$l > 0$ and $J[l] = j$}{$l \leftarrow l - 1$}
\lIf{$X[i^\star + 1] = Y[j]$}{$J[l + 1] \leftarrow j$}
}
}
\caption{\textsc{findBranch}\label{algo findBranch}}
\end{algorithm}

Based on Lemma~\ref{lem next}, \textsc{findBranch} examines whether $k$ satisfies the condition of $k^\star_P$ for each index $k$ from $L(X, Y)$ to $1$ in descending order.
This is done by using variables $j^\star$ and $i^\star$ as follows.
Let $Q$ be the sequence of $L(X, Y)$ indices such that $X[1 : Q[k]]$ is the shortest prefix of $X$ with $Y[P[1 : k]]$ as a subsequence.
After $i^\star$ is initialized to $Q[k] - 1$, for each $j^\star$ from $P[k] + 1$ to $|Y|$, if there exists an index $i$ with $Q[k - 1] + 1 \leq i \leq i^\star$ such that $X[i] = Y[j^\star]$, then $i^\star$ is updated to the least such $i$, so that $L(X[Q[k - 1] + 1 : i^\star], Y[P[k - 1] + 1 : j^\star]) = 1$ and $X[1 : i^\star]$ is the shortest prefix of $X$ with $Y[P[1 : k - 1] \circ j^\star]$ as a subsequence.
Furthermore, after this update of $i^\star$, if $L(X[i^\star + 1 : |X|], Y[j^\star + 1 : |Y|]) = L(X, Y) - k$, then $\mathsf{k}^\star$ and $\mathsf{P}[k]$ are set to $k$ and $j^\star$ as $k^\star_P$ and $j^\star_P$, respectively, to complete the execution of \textsc{findBranch}; otherwise, $i^\star$ is decreased by one because $L(X[i + 1 : |X|], Y[j + 1 : |Y|]) < L(X, Y) - k$ for any indices $i$ with $i \geq i^\star$ and $j$ with $j \geq j^\star$.

Algorithm~\ref{algo findBranch} presents a pseudo-code of \textsc{findBranch}.
To provide quick access to $L(X[i^\star + 1 : |X|], Y[j^\star + 1 : |Y|])$, a technique similar to \textsc{firstLCS} is adopted.
That is, \textsc{findBranch} maintains variable $J$ so that $|J| = L(X[i^\star + 1 : |X|], Y)$ and for any index $l$ with $1 \leq l \leq |J|$, $J[l]$ is the greatest index $j$ with $L(X[i^\star + 1 : |X|], Y[j : |Y|]) = l$.
Procedure \textsc{decI} (lines 17 through 21) is used to decrease $i^\star$ by one and update $J$ accordingly.
Lines 1 through 4 construct $Q$ in a straightforward way.
The process for each $k$ is executed from $L(X, Y)$ to $1$ in descending order in the manner mentioned earlier by lines 7 through 16 

\begin{lemma}\label{lem findBranch}
For any element $P$ in $\mathcal{P}(X, Y)$, \textsc{findBranch} implemented as Algorithm~\ref{algo findBranch} sets $\mathsf{k}^\star$ and $\mathsf{P}[\mathsf{k}^\star]$ to $k^\star_P$ and $j^\star_P$, respectively, if $P$ is not the last element in $\mathcal{P}(X, Y)$, or sets $\mathsf{k}^\star$ to a dummy index, otherwise, in $O(|X||Y|)$ time and $O(L(X, Y))$ space.
\end{lemma}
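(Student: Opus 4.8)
The plan is to prove correctness by fixing invariants for the variables $Q$, $i^\star$, and $J$ and then checking, one iteration of the outer loop at a time, that \textsc{findBranch} carries out exactly the test described by Lemma~\ref{lem next}; the running‑time and space bounds will then follow by a short amortization. First I would record the routine invariants: lines 1--4 build $Q$ by the obvious greedy induction on $i$, so that $|Q| = L(X,Y)$, $X[1:Q[k]]$ is the shortest prefix of $X$ with $Y[P[1:k]]$ as a subsequence, and $Q$ is strictly increasing; $i^\star$ never increases during the whole execution; and each call of \textsc{decI} preserves the invariant that $|J| = L(X[i^\star+1:|X|],Y)$ and $J[l]$ is the greatest index $j$ with $L(X[i^\star+1:|X|],Y[j:|Y|]) = l$ for every $l$ with $1 \le l \le |J|$ — this is the same recurrence (Lemma~\ref{lem L}) already used for $J^-$ and $J^+$ in \textsc{firstLCS}, now applied to prepending the character $X[i^\star+1]$. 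Since the decrements at lines 13 and 16 occur only for a matched index $i \ge Q[k-1]+1$, one then checks that $i^\star \ge Q[k]$ holds on entry to the loop body for index $k$, so line 8 resets $i^\star$ to exactly $Q[k]-1$, and that during iteration $k$ one has $Q[k-1] \le i^\star \le Q[k]-1$ with $i^\star$ non‑increasing.

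Next I would connect the loop to Lemma~\ref{lem next}. Write $m(k,j^\star)$ for the least index exceeding $Q[k-1]$ at which $X$ matches $Y[j^\star]$; this is precisely the index ``$i^\star$'' of that lemma, and $X[1:m(k,j^\star)]$ is the shortest prefix of $X$ with $Y[P[1:k-1]\circ j^\star]$ as a subsequence. I would first observe that, once condition~(B), namely $L(X[m(k,j^\star)+1:|X|],Y[j^\star+1:|Y|]) = L(X,Y)-k$, holds, condition~(A) is automatic: $Y[P[1:k-1]]\circ Y[j^\star]$ is a common subsequence of $X[1:m(k,j^\star)]$ and $Y[1:j^\star]$, and since $P$ is an LCS‑position sequence one has $L(X[1:Q[k-1]],Y[1:P[k-1]]) = k-1$, so~(B) forces $L(X[1:m(k,j^\star)],Y[1:j^\star]) = k$ and hence $L(X[Q[k-1]+1:m(k,j^\star)],Y[P[k-1]+1:j^\star]) = 1$. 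Then I would show that the values of $j^\star$ the algorithm silently skips can never be the branch point of $P$: if $m(k,j^\star) \ge Q[k]$, then either $m(k,j^\star) = Q[k]$ and $Y[j^\star] = Y[P[k]]$, so $Y[1:j^\star]$ is not the shortest prefix of $Y$ realizing $Y[P[1:k-1]]\circ Y[j^\star]$ and $P[1:k-1]\circ j^\star$ is not a prefix of any element of $\mathcal{P}(X,Y)$, or $m(k,j^\star) > Q[k]$ and $Y[P[k]]\circ Y[j^\star]$ is a length‑$2$ common subsequence of $X[Q[k-1]+1:m(k,j^\star)]$ and $Y[P[k-1]+1:j^\star]$, incompatible with~(A), so~(B) cannot hold; and if $m(k,j^\star)$ lies strictly above the current $i^\star$ but below $Q[k]$, then some earlier $j^\star_0$ in the same iteration satisfied $m(k,j^\star_0) \le i^\star$ and failed~(B), so — because $L(X[1:m(k,j^\star_0)],Y[1:j^\star_0]) \ge k$ makes that failure a strict inequality — $L(X[i'+1:|X|],Y[j'+1:|Y|]) < L(X,Y)-k$ for all $i' \ge m(k,j^\star_0)$ and $j' \ge j^\star_0$, so $(k,j^\star)$ fails~(B) too. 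Finally, when the algorithm does reach line 14 we have $i^\star = m(k,j^\star)$ after line 13, and since $L(X[m(k,j^\star)+1:|X|],Y[j^\star+1:|Y|]) \le L(X,Y)-k$ there, the $J$‑invariant shows~(B) holds iff $|J| \ge |\mathsf{P}|-k$ and $j^\star+1 \le J[|\mathsf{P}|-k]$ (with $J[0]$ read as $|Y|+1$). As $k$ descends and $j^\star$ ascends, the first halt therefore occurs at the greatest $k$ and, for that $k$, the least $j^\star$ satisfying (A) and~(B), i.e.\ at $k^\star_P$ and $j^\star_P$ by Lemma~\ref{lem next}, setting $\mathsf{k}^\star$ and $\mathsf{P}[\mathsf{k}^\star]$ accordingly via line 15; and if no halt occurs, $\mathsf{k}^\star$ keeps the dummy index of line 6, which is correct since then $P$ is the last element of $\mathcal{P}(X,Y)$.

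For the resource bounds, $|Q| = |\mathsf{P}| = L(X,Y)$ and $|J| = L(X[i^\star+1:|X|],Y) \le L(X,Y)$, so all storage is $O(L(X,Y))$; building $Q$ is $O(|X|)$, each \textsc{decI} is $O(|Y|)$ and there are at most $|X|$ of them since $i^\star$ starts at $|X|$ and only decreases (total $O(|X||Y|)$), the scan at line 11 is $O(Q[k]-Q[k-1])$ per $j^\star$ since $Q[k-1] < i^\star \le Q[k]-1$ there (so telescoping over $k$ and multiplying by the $O(|Y|)$ values of $j^\star$ per $k$ gives $O(|X||Y|)$), and the remaining $O(1)$‑per‑$j^\star$ work is spent over $O(L(X,Y)|Y|) = O(|X||Y|)$ values of $j^\star$. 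I expect the case analysis of the skipped $j^\star$ to be the main obstacle: justifying that $i^\star$ may legitimately be carried over between successive $j^\star$ within one iteration rather than reset to $Q[k]-1$ — which rests on a failure of~(B) being a strict inequality — and showing that $m(k,j^\star) \ge Q[k]$ excludes a branch point, which forces one to juggle both the $X$‑side and the $Y$‑side ``shortest prefix'' characterizations implicit in $P$ being an LCS‑position sequence.
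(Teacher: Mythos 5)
Your resource-bound analysis coincides with the paper's own proof of this lemma, which in fact consists \emph{only} of that analysis (the $O(|X|)$ construction of $Q$, the $O(|Y|)$ cost per \textsc{decI}, the $O((Q[k]-Q[k-1])|Y|)$ cost per $k$, telescoping to $O(|X||Y|)$, and the $O(L(X,Y))$ bound on $|Q|$, $|J|$, $|\mathsf{P}|$) and delegates correctness entirely to Lemma~\ref{lem next} and the surrounding prose. Your invariants for $Q$, $i^\star$, $J$, the observation that condition (B) forces condition (A), and the two skipping arguments (the $m(k,j^\star)\ge Q[k]$ case and the monotonicity argument for carrying $i^\star$ across successive $j^\star$) are all sound and go well beyond what the paper writes down.

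There is, however, a genuine gap in your final step, and it is precisely at the point where you hand the argument back to Lemma~\ref{lem next}. You prove ``skipped $\Rightarrow$ not a branch point,'' but you never prove ``first halt $\Rightarrow$ branch point,'' and the two claims you make in between are inconsistent with each other: you correctly note that a pair with $m(k,j^\star)=Q[k]$ can satisfy (A) and (B) and yet fail to be a branch point (because $Y[1:j^\star]$ is not the shortest prefix of $Y$ realizing $Y[P[1:k-1]]\circ Y[j^\star]$), but you then conclude that the first halt occurs at the greatest $k$ and least $j^\star$ satisfying (A) and (B) and identify that pair with $(k^\star_P,j^\star_P)$ ``by Lemma~\ref{lem next}.'' The algorithm's halting condition is strictly stronger than the conditions of Lemma~\ref{lem next} --- it additionally requires $m(k,j^\star)<Q[k]$ --- and the literal conditions of Lemma~\ref{lem next} can be met by a non-branch pair: for $X=\mathtt{ab}$, $Y=\mathtt{aab}$, $P=1\circ 3$ is the unique element of $\mathcal{P}(X,Y)$, yet $(k,j^\star)=(1,2)$ satisfies both $L(X[1:1],Y[1:2])=1$ and $L(X[2:2],Y[3:3])=L(X,Y)-1$. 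So you cannot read the branch point off Lemma~\ref{lem next} as stated; you must show directly that the first pair in the scan order satisfying (A), (B) \emph{and} $m(k,j^\star)<Q[k]$ extends to an element of $\mathcal{P}(X,Y)$ lying strictly between $P$ and any other candidate successor. The missing piece is the $Y$-side shortest-prefix condition at position $k$: from $m(k,j^\star)<Q[k]$ one gets that $Y[j^\star]$ cannot occur in $Y[P[k-1]+1:P[k]]$ (otherwise $Y[P[1:k-1]]\circ Y[j^\star]\circ Y[P[k:L(X,Y)]]$ would be a common subsequence of length $L(X,Y)+1$), and combined with the minimality of $j^\star$ among pairs with the same first match in $X$ this yields that $Y[1:j^\star]$ is the shortest prefix of $Y$ containing $Y[P[1:k-1]]\circ Y[j^\star]$, whence $P[1:k-1]\circ j^\star$ followed by the first element of $\mathcal{P}(X[m+1:|X|],Y[j^\star+1:|Y|])$ is a genuine element of $\mathcal{P}(X,Y)$. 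Without this ``include'' direction the proof only shows the algorithm never halts too late, not that it never halts at a spurious candidate.
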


\begin{proof}
Execution of lines 1 through 4 is done in $O(|X|)$ time.
Each execution of \textsc{decI} is done in $O(|Y|)$ time.
Hence, the process for each $k$ (lines 7 through 16) is executed in $O((Q[k] - Q[k - 1])|Y|)$ time, implying that \textsc{findBranch} runs in $O(|X||Y|)$ time.
It is easy to verify that \textsc{findBranch} uses $O(L(X, Y))$ space.
\end{proof}

\bibliographystyle{plain}
\bibliography{arxiv250522}

\end{document}